\newcommand{\exclude}[1]{}
\newcommand{\q}{\phantom0}
\newcommand{\qqq}{\q\q\q}
\newcommand{\bis}{{\prime\prime}}
\newcommand{\THR}{\ensuremath{\mathit{THRESH}}}
\newcommand{\THRb}{\ensuremath{\mathit{THRESH_{\text{bin}}}}}
\newcommand{\ML}{\ensuremath{\mathit{MATCHLIST}}}
\newcommand{\MLb}{\ensuremath{\mathit{MATCHLIST_{\text{bin}}}}}
\newcommand{\pcpred}{\ensuremath{\mathrm{pred}}}
\newcommand{\pcsucc}{\ensuremath{\mathrm{succ}}}
\newcommand{\pcrank}{\ensuremath{\mathrm{rank}}}
\newcommand{\pcselect}{\ensuremath{\mathrm{select}}}
\newcommand{\pccopy}{\ensuremath{\mathrm{copy}}}
\newcommand{\pcinsert}{\ensuremath{\mathrm{insert}}}
\newcommand{\pcdecrease}{\ensuremath{\mathrm{decrease}}}
\newcounter{lnoc}
\newenvironment{sdalgorithm}[1]{%
\hrule height 0.8pt \vspace{0.6ex} \small#1\vspace{0.6ex}\hrule height 0.5pt \vspace{-2.0ex}
\setcounter{lnoc}{0}
\small
\begin{tabbing}
000000\=XXI\=XXI\=XXI\=XXI\=XXI\=XXI\=\kill
}{%
\end{tabbing}
\vspace{-2.0ex}\hrule height 0.8pt\vspace{0ex}}
\newcommand{\lno}[1][0]{{\footnotesize\sffamily 
\ifnum#1=0
\stepcounter{lnoc} 
\ifnum\thelnoc<10
\phantom0%
\fi
\thelnoc
\else
\thelnoc.#1
\fi
}\>}
\newcommand{\pcfor}{{\bfseries for~}}
\newcommand{\pcforeach}{{\bfseries for each~}}
\newcommand{\pcto}{{\bfseries to~}}
\newcommand{\pcdo}{{\bfseries do~}}
\newcommand{\pcif}{{\bfseries if~}}
\newcommand{\pcthen}{{\bfseries then~}}
\newcommand{\pcreturn}{{\bfseries return~}}
\newcommand{\pccomment}[1]{\qqq\{\textit{#1}\}}
\begin{document}

\title{Efficient algorithms for the longest common subsequence in $k$-length substrings}

\author{Sebastian Deorowicz$^\dag$ and Szymon Grabowski$^\ddag$}
\institute{$^\dag$ Institute of Informatics, Silesian University of Technology, \\
  Akademicka 16, 44--100 Gliwice, Poland \\
  $^\ddag$ Lodz University of Technology, Institute of Applied Computer Science,\\
  Al.\ Politechniki 11, 90--924 {\L}\'od\'z, Poland, 
  \email{sgrabow@kis.p.lodz.pl}
}
  


\maketitle

\begin{abstract}
Finding the longest common subsequence in $k$-length substrings (LCS$k$)
is a recently proposed problem motivated by computational biology.
This is a generalization of the well-known LCS problem in which 
matching symbols from two sequences $A$ and $B$ are replaced with matching non-overlapping substrings of length $k$ from $A$ and $B$.
We propose several algorithms for LCS$k$, being non-trivial 
incarnations of the major concepts known from LCS research 
(dynamic programming, sparse dynamic programming, tabulation).
Our algorithms make use of a linear-time and linear-space 
preprocessing
 finding the occurrences of all the 
substrings of length $k$ from one sequence in the other sequence.
\end{abstract}

\section{Introduction}
In last years 
the famous longest common subsequence problem~\cite{CHL2007} gave rise 
to many related sequence similarity problems, often motivated 
by computational biology.
One of them, proposed very recently by Benson et al.~\cite{BLS2013}, 
is the {\em longest common subsequence in $k$-length substrings} problem, 
which can be defined as follows.
Given two sequences, $A = a_1 a_2 \ldots a_n$ and 
$B = b_1 b_2 \ldots b_n$\footnote{All the algorithms presented in this paper 
can easily be translated to the case of sequences of arbitrary lengths 
$n$ and $m$, 
but we use the original problem definition.}
over a common alphabet $\Sigma$, 
the task is to find the maximal $\ell$ such that there exist 
$\ell$ pairs of substrings of length $k$ (called $k$-strings), 
$a_{i_e - k + 1} \ldots a_{i_e}$ and 
$b_{i_e - k + 1} \ldots b_{i_e}$, 
$1 \leq e \leq \ell$, 
where $a_{i_e - k + 1} \ldots a_{i_e}$ is equal 
to $b_{i_e - k + 1} \ldots b_{i_e}$  
and $i_e + k\leq i_{e+1}$ for any valid $e$ 
(that is, the strings of length $k$ taken from 
one of the sequences are non-overlapping).
We will often use an alternative notation for a substring: 
instead of $a_i\ldots a_j$ ($b_i\ldots b_j$) we will write 
$A_{i\ldots j}$ ($B_{i\ldots j}$).

We begin with a critique of the result from Benson et al.~\cite{BLS2013}.
The authors claim their time complexity to be $O(n^2)$, 
while in fact it is $O(k n^2)$, because 
comparing two $k$-strings
takes (na\"{\i}vely) $O(k)$ time.
In the proof of Theorem~1 they say:
``We assume that $k$ is rather a small constant thus computing $\mathit{kMatch}(i, j)$ 
is done in constant time'', which cannot be justified on a theoretical ground 
(on the other hand, their space complexity is justly presented as $O(nk)$).

We first give a (simple) fix to the technique of Benson et al., 
obtaining the true $O(n^2)$ time complexity, and then show 
three more advanced algorithms.
The first of them is based on the Hunt--Szymanski~\cite{HS1977} approach  
(originally used for the LCS problem), applying the sparse 
dynamic programming paradigm. 
The second works better if the number of matches in the dynamic programming 
matrix is large and uses the observation that matches forming a longest 
common subsequence must be separated with gaps of size at least $k$.
Its variant based on the van Emde Boas tree~\cite{vEB1977} is also briefly discussed.
Finally, a tabulation-based algorithm is presented, 
with a logarithmic speedup over the quadratic-time dynamic programming algorithm.
Our results are summarized in Table~\ref{tab:results}.

\begin{table}[t]
\caption{Our results. The last column is the complexity of the extra space needed to extract a longest common subsequence. 
The extra time for this stage is not presented, but its complexity never exceeds the corresponding time complexity to find 
the subsequence length. 
Notation: $r$ is the number of matches, $\ell \leq n/k$ is the solution length.\looseness=-1}
\label{tab:results}
\begin{footnotesize}
\begin{tabular}{lccc}
\toprule
Algorithm			& Time complexity	& Space complexity & Extraction space \\
\midrule
DP (Sect.~2)  & $O(n^2)$        & $O(nk)$     & $O(n^2)$ \\
Sparse (Sect.~3)       & $O(n + r\log\ell)$        & $O(n + \min(r, n\ell)))$     & $O(r)$ \\
Dense  (Sect.~3)        & $O(n^2 / k + n(k\log n)^{2/3})$  & $O(n)$     & $O(n\ell)$ \\
Dense-vEB (Sect.~3)    & $O(n^2\log\log n / k)$        & $O(n\log\log n)$   & $O(n\ell)$ \\
DP-4R (Sect.~4)        & $O(n^2/\log n)$ & $O(n + nk/\log n)$     & $O(n^2/\log n)$ \\
\bottomrule
\end{tabular}
\end{footnotesize}
\end{table}

\section{The LCS$k$ in $O(n^2)$ time}
\label{sec:orig}
The cornerstone for any dynamic programming (DP) based solution for 
the LCS$k$ problem will be the following recurrence.
(It is closely related to the one given by Benson et al.
We decided to introduce our own one, with match reporting
at the end rather than start symbol of the $k$-string,
since it simplifies the formulation of the algorithms 
in the rest of the paper.) 
\begin{equation}
M(i, j) = 
  \begin{cases}\max
	\begin{cases}
   M(i,j-1),\\
   M(i-1,j),\\
	\end{cases} & \text{if } A_{i-k+1\ldots i} \neq B_{j-k+1\ldots j},\\
   M(i-k,j-k) + 1, & \text{if } A_{i-k+1\ldots i} = B_{j-k+1\ldots j},\\
  \end{cases}
\label{eq:DP}
\end{equation} 
and the boundary conditions:
$M(i, j) = 0$ for all valid $i$, $j$ when $i < k$ or $j < k$.
Any location $(i, j)$ in $M$ will be called a match 
if $A_{i-k+1 \ldots i} = B_{j-k+1 \ldots j}$. 

Efficient computation of the recurrence~(\ref{eq:DP}) depends on
quick tests if $A_{i-k+1\ldots i} = B_{j-k+1\ldots j}$.
This can be achieved with the longest common extension (LCE)
query, which can be performed in $O(1)$ time after $O(n)$-time
preprocessing.
This procedure builds an augmented suffix tree for solving
the lowest common ancestor (LCA) queries in constant time~\cite{BFC2000},
over the concatenated sequence $A\#B$,
where $\#$ is a unique symbol (lexicographically largest)
working as a terminator of $A$.
Testing if $A_{i-k+1\ldots i} = B_{j-k+1\ldots j}$ translates to
the question if $\mathrm{LCE}_{A\#B}(i- k + 1, n + 1 + j - k + 1) \geq k$.

We however propose an alternative $O(n)$-time preprocessing routine
letting us access the successive matches to each
$k$-string $A_{j-k+1\ldots j}$ in sequence $B$ in constant time,
and requiring $O(n)$ words of space.
Since one list of matches is never longer than a row in the DP
matrix, we can scan the list of matches when processing each row
in a linear time,
which results in overall $O(n^2)$ time for the matrix computation.
The longest sequence itself may be extracted in a similar manner 
as in the DP algorithm for LCS, in $O(n)$ extra time and using $O(n^2)$ 
extra space.\looseness=-1

Our preprocessing routine will also be used in 
the two algorithms described 
in Section~\ref{sec:HS} 
and the algorithm from Section~\ref{sec:4R}.
The procedure makes use of a suffix array for the concatenated 
sequence $B\#A$.
We also build its longest common prefix (LCP) table; both operations 
can be accomplished in linear time and using linear space 
(see, e.g.,~\cite{KSB2006,KLAAP2001}).
The computed LCP values allow us to partition the sorted set of suffixes 
into maximal groups such that the LCP between successive items is at least $k$.
In other words, suffixes from such groups have a prefix of length $k$ symbols
in common.

These $k$-string groups are radix-sorted according to the 
starting position of the suffix. 
To make it efficient ($O(n)$ time), the sort is performed once for all groups;
all the suffixes are represented as pairs $(\mathit{group\_id}, \mathit{start\_pos})$, 
where $\mathit{group\_id}$ is 1 for the first group, 2 for the second group, etc., 
in their position order.
Let us denote the array with such pairs with $S$.
After the sort, suffixes in the groups are kept together, in 
starting position order.
Note that within a group all suffixes starting in $B$ are located before 
any suffix starting in~$A$.

We scan over all the items in $S$, except for the last one 
(which must correspond to the suffix starting with $\#$), 
and we insert related data into another array, $X$ of length $|A|+|B| = 2n$. 
More precisely, for each examined $S[i]$ we write in $X[S[i].\mathit{start\_pos}]$ 
a triple:
$(\mathit{fa\_pos}, \mathit{fb\_pos}, \mathit{ng\_pos})$,
where $\mathit{fa\_pos}$ ($\mathit{fb\_pos}$) is the position in $S$ of the first suffix 
from this group starting in $A$ (in $B$)
and $\mathit{ng\_pos}$ is the position in $S$ of the first suffix from the next group.
This operation also takes linear time and the preprocessing is done.

As stated above, a rowwise scan requires fast access to all matches to
successive 
$A_{i-k+1 \ldots i}$ $k$-strings.
It is now enough to examine the information stored at $X[S[i].\mathit{start\_pos}]$ 
(which in turn refers to $S$), to find the match locations in the row 
in $O(1)$ time per each.

Below we 
give
two simple properties of the matrix $M$.

\begin{lemma}
For each $i$ and $j$ the value $M(i, j)$ is the LCS$k$ of the prefixes $A_{1\ldots i}$ and $B_{1\ldots j}$.
\end{lemma}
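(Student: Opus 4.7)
The plan is to proceed by induction on $i+j$. For the base case, when $i < k$ or $j < k$, neither prefix is long enough to contain any $k$-string, so the LCS$k$ of the two prefixes equals $0$, matching the boundary condition on $M$.

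For the inductive step, assume the claim for all $(i',j')$ with $i'+j' < i+j$, and split into the two branches of the recurrence~(\ref{eq:DP}). In the no-match case $A_{i-k+1\ldots i} \ne B_{j-k+1\ldots j}$, every optimal solution of the prefix pair has its last matching pair ending at some $(i_\ell,j_\ell) \ne (i,j)$, so either $i_\ell < i$ or $j_\ell < j$. In the first situation the whole solution is also a valid LCS$k$ of $A_{1\ldots i-1}$ and $B_{1\ldots j}$, and in the second of $A_{1\ldots i}$ and $B_{1\ldots j-1}$, so by the inductive hypothesis its length is at most $\max(M(i-1,j),M(i,j-1))$. The reverse inequality follows from the trivial monotonicity that LCS$k$ on larger prefixes cannot decrease, giving equality.

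In the match case I will prove the stronger statement that some optimal solution ends with the matching pair at $(i,j)$; combined with the inductive hypothesis on $M(i-k,j-k)$ this yields the claim. The lower bound is immediate: appending $(i,j)$ to an optimal solution on $A_{1\ldots i-k}, B_{1\ldots j-k}$ produces a valid solution of length $M(i-k,j-k)+1$. For the upper bound, take an arbitrary optimal solution with last pair $(i_\ell,j_\ell)$. If $(i_\ell,j_\ell)=(i,j)$ we are already done. If $i_\ell \le i-k$ and $j_\ell \le j-k$, appending $(i,j)$ strictly lengthens the solution, contradicting optimality. In the remaining subcase one replaces $(i_\ell,j_\ell)$ by $(i,j)$: the non-overlap constraint $i_{\ell-1}+k \le i_\ell \le i$ (and its $j$-counterpart) guarantees that the replacement is valid, the length is preserved, and the first $\ell-1$ pairs still fit in $A_{1\ldots i-k}, B_{1\ldots j-k}$, closing the argument via the inductive hypothesis.

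The crux is the exchange argument in the match case, specifically the observation that the gap condition $i_e+k \le i_{e+1}$ from the problem definition lets us always slide the final matching pair up to $(i,j)$; degenerate situations such as $\ell = 1$ (no preceding pair) or $i_\ell = i-k$ require only a routine check.
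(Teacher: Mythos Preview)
Your proof is correct and follows precisely the approach the paper points to: it carries out the ``classic'' LCS-style induction, with the only nontrivial adaptation being the exchange argument in the match branch (sliding the final pair to $(i,j)$ using the gap constraint $i_{e}+k\le i_{e+1}$), which you handle correctly. The paper itself does not spell out these details, so your write-up is a faithful expansion of what it omits.
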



The proof is rather straightforward and is very similar to the classic one 
for the LCS problem~\cite{CHL2007}.
As a consequence, $M(n, n)$ is the solution of the LCS$k$ problem.
It is also easy to notice that 
$M(i,j+1) - M(i,j) \in \{0, 1\}$ 
and $M(i+1,j) - M(i,j) \in \{0, 1\}$ for all valid $i$ and $j$.
The following lemma describes a feature of the matrix $M$ 
which will be crucial for both 
algorithms from Section~\ref{sec:HS}.

\begin{lemma} \label{lemma2}
Let vector $V(i)$, for any $i$, describe the changes in $i$th row of~$M$, i.e., $V(i)$ stores the pairs $\langle M(i, j), j\rangle$ such that $M(i,j-1) + 1 = M(i,j)$.
Then, each $\langle h, j\rangle \in V(i)$ implies that it is impossible that $\langle h+1, j^\prime\rangle \in V(i^\prime)$, for any $i \le i^\prime \le i+k$ and $j^\prime < j+k$.
\end{lemma}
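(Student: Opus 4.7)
The plan is to prove the lemma by contradiction, leveraging Lemma~1, which identifies $M(i,j)$ with the LCS$k$ length of the prefixes $A_{1\ldots i}$ and $B_{1\ldots j}$. The pair $\langle h, j \rangle \in V(i)$ says that $M(i,j) = h$ while $M(i, j-1) = h-1$. I would first extract a useful structural consequence: any LCS$k$ of length $h$ of $A_{1\ldots i}$ and $B_{1\ldots j}$ must have its last match $k$-string end exactly at column $j$ in $B$, because otherwise the same match sequence would fit inside $A_{1\ldots i}$ and $B_{1\ldots j-1}$, forcing $M(i,j-1) \ge h$.

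Next, I would suppose for contradiction that $\langle h+1, j'\rangle \in V(i')$ for some $i \le i' \le i+k$ and $j' < j+k$. Applying the same structural observation at $(i', j')$, there exists a witnessing LCS$k$ of length $h+1$ of $A_{1\ldots i'}$ and $B_{1\ldots j'}$ whose last match $k$-string ends at $(i_{h+1}, j')$ for some $i_{h+1} \le i'$. Removing this last match, the remaining $h$ matches form a common structure that, by the non-overlap condition $i_h + k \le i_{h+1}$ and $j_h + k \le j'$, fits inside the prefixes $A_{1\ldots i_{h+1} - k}$ and $B_{1\ldots j'-k}$. Consequently $M(i_{h+1}-k, j'-k) \ge h$.

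Finally, I would translate the range constraints into index inequalities: $i_{h+1} - k \le i' - k \le i$ since $i' \le i + k$, and $j' - k \le j - 1$ since $j' \le j + k - 1$. Monotonicity of $M$ (both arguments increasing can only raise the value, an immediate consequence of Lemma~1) then yields
\[
M(i, j-1) \ge M(i_{h+1}-k, j'-k) \ge h,
\]
which contradicts $M(i, j-1) = h - 1$ coming from $\langle h, j\rangle \in V(i)$.

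The one step I expect to require the most care is the ``trimming'' argument that the last match of an optimal solution at $(i', j')$ must end exactly at column $j'$, and the bookkeeping that the first $h$ matches survive inside the strictly smaller prefixes $A_{1\ldots i_{h+1}-k}$ and $B_{1\ldots j'-k}$; the rest is arithmetic on the window $[i, i+k] \times [\,\cdot\,, j+k)$ plus monotonicity.
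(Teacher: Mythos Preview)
Your proof is correct and follows essentially the same contradiction argument as the paper: from $M(i',j')=h+1$ you obtain an $h$ inside a strictly smaller rectangle (the paper phrases this as ``truncating both sequences by $k$ symbols reduces the LCS$k$ by at most~1'' rather than explicitly peeling off the last match), and then monotonicity together with $i'-k\le i$, $j'-k<j$ contradicts $M(i,j-1)=h-1$. Your explicit location of the last match at column $j'$ is sound but not actually needed for the argument to go through.
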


\begin{proof}
Let us assume otherwise, so let $\langle h+1, j^\prime\rangle \in V(i^\prime)$, for some $i \le i^\prime \le i+k$ and $j^\prime < j+k$.
It means that $M(i^\prime, j^\prime) = h+1 = 
\mathit{LCSk} (A_{1\ldots i^\prime}, B_{1\ldots j^\prime})$.
Truncating both 
sequences by $k$ symbols does not change their LCS$k$ or reduces it by~1, so
$M(i^\prime-k, j^\prime-k) \ge h$.
This is, however, impossible as $i^\prime-k \le i$ and $j^\prime-k < j$ and the leftmost cell of $M$ among rows $0, \ldots, i$ containing value $h$ is in column~$j$.
\qed
\end{proof}

Simply speaking, the above lemma says that the increments in $M$ are separated by at least $k$ cells in both horizontal and vertical directions.

\section{Two sparse dynamic programming algorithms}
\label{sec:HS}

One of the major paradigms for solving LCS-related problems is 
{\em sparse dynamic programming} (SDP).
The overall idea is to visit only those DP matrix cells which correspond to matches.
As the number of matches, $r$, is usually significantly 
smaller than $n^2$, we can often expect 
a significant speedup over the standard DP procedure.
The first such algorithm for the LCS problem was given by Hunt and Szymanski~\cite{HS1977}, 
with 
$O(r\log\ell)$ time in its basic variant, 
where $\ell \leq n$ is the LCS length.
While this complexity is superquadratic in the worst case 
(i.e., for $r = \Theta(n^2)$),
there exists a theoretical solution based on the HS approach which is free 
of this drawback~\cite[Sect.~5]{EGGI1992}. 
In this section we will present two SDP algorithms for the LCS$k$ problem, 
the first of which being
an adaptation of the HS approach.
This algorithm, called LCS$k$-Sparse, is presented in Fig.~\ref{fig:lcs:alg}.

\begin{figure}[t]
\begin{sdalgorithm}{LCS$k$-*($A$, $B$, $k$)}
\lno	Compute $\ML$ for $A$ and $B$\\
\lno	$\THR[0] \gets \langle -\infty, +\infty \rangle$\\
\lno	\pcfor $i \gets 1$ \pcto $k-1$ \pcdo\\
\lno\>	$\THR[i] \gets \pccopy(\THR[i-1])$\\
\lno	\pcfor $i \gets k$ \pcto $n$ \pcdo\\
\lno\>	$\THR[i] \gets \pccopy(\THR[i-1])$\\
\lno\>	$\sigma_{A} \gets \text{get\_k-string}(A, i)$\\
\pccomment{Sparse variant}\\
\lno\>	\pcforeach match $x$ in $\ML[\sigma_A]$ \pcdo\\
\lno\>\>		$j^\prime \gets \THR[i-k].\pcpred(x-k+1)$\\
\lno\>\>		$h \gets \THR[i-k].\pcrank(j^\prime)$\\
\lno\>\>		$j^\bis \gets \THR[i].\pcselect(h+1)$\\
\lno\>\>		\pcif $x < j^\bis$ \pcthen\\
\lno\>\>\>			$\THR[i].\pcdecrease(j^\bis, x)$\\
\lno\>\>\>			\pcif $j^\bis = +\infty$ \pcthen\\
\lno\>\>\>\>			$\THR[i].\pcinsert(+\infty)$\\
\lno	\pcreturn $|\THR[n]|-2$\\
\pccomment{Dense variant\setcounter{lnoc}{7}}\\
\lno\>	\pcfor $h \gets 1$ \pcto $|\THR[i-1]|-1$ \pcdo\\
\lno\>\>		$j^\prime \gets \THR[i-k][h-1]$\\
\lno\>\>		$x \gets \ML[\sigma_A].\pcsucc(j^\prime+k-1)$\\
\lno\>\>		$j^\bis \gets \THR[i][h]$\\
\lno\>\>		\pcif $x < j^\bis$ \pcthen\\
\lno\>\>\>		$\THR[i][h] \gets x$\\
\lno\>\>\>			\pcif $j^\bis = +\infty$ \pcthen\\
\lno\>\>\>\>			$\THR[i].\pcinsert(+\infty)$\\
\lno	\pcreturn $|\THR[n]|-2$
\end{sdalgorithm}
\caption{The sparse and dense DP algorithms for the LCS$k$ problem.}
\label{fig:lcs:alg}
\end{figure}


Assume that we are going to visit the matches rowwise, 
each row scanned from left to right.
We start with a simple definition.
$M(i, j)$ 
stores a match of {\em rank} $h$ iff $A_{i-k+1\ldots i} = B_{j-k+1\ldots j}$  and $\mathrm{LCSk}(A_{1 \ldots i}, B_{1 \ldots j}) = h$.
In the preprocessing (line~1), lists of successive occurrences of all 
$k$-strings from the sequence~$A$ are gathered (in $O(n)$ time), 
as described in Section~\ref{sec:orig}.
The main processing phase makes use of a persistent red-black 
tree~\cite{DSST1989} $\THR$ for maintaining 
the leftmost seen-so-far columns of matches of growing ranks.
More precisely, accessing $\THR[i][h]$ answers the question 
about the leftmost column in row $i$ with a match of rank $h$.
When processing row $i$, we will often be interested in accessing 
$\THR[i-k]$, i.e., the state of this structure $k$ rows earlier.
For the first $k$ rows of the DP matrix the structure $\THR$
contains only two sentinel values, $-\infty$ and $+\infty$ (lines~2--4), 
the former with rank 0.
For each of the following rows, the state of $\THR$ 
from the previous row is modified only if the current match 
on $\ML$, of rank $h+1$, is in column $x$, which is less 
than the column $j''$ of the $(h+1)$-th value in $\THR[i-1]$ 
(i.e., also $\THR[i]$ so far).
This modification (lines~12--13) involves decreasing a value in the 
structure, which may be implemented as one delete and one insert 
operation.
Note that when the decreased value is the $+\infty$ sentinel, 
the $\THR[i]$ grows by one ($+\infty$ is again inserted at its 
end, in line 15).
All operations on $\THR$ have logarithmic cost in the worst case, 
that is, $O(\log\ell)$, where $\ell \leq n/k$ is the LCS$k$ length.
The overall worst case time for the algorithm is thus $O(n + r\log\ell)$.
The space consumption is usually determined by the number of matches $r$
(we need $O(1)$ nodes of the persistent RB tree per each match).

The presented code only returns the length of a LCS$k$.
Yet, to obtain the common subsequence itself we only have to modify 
the algorithm slightly, and with each entry in $\THR[i]$ store 
a reference to the $\THR$ value used to compute the current value. 
This enables backtracking the solution in $O(\ell)$ extra time 
and using $O(r)$ words of space.



If the number of matches is close to $n^2$, 
a better solution is to use the algorithm LCS$k$-Dense 
(Fig.~\ref{fig:lcs:alg}).
The first steps (lines~1--4) resemble the previous variant, 
but here the data structure $\THR$ is not persistent 
(and may be simply a dynamic array), 
hence the copy routine, used for each row $i$, has its cost linear in 
the size of $\THR[i-1]$.
The main loop, which is run for each row $i$, $i \geq k$, 
starts with making a copy of $\THR[i-1]$ into $\THR[i]$.
Then, $\THR[i]$ is traversed 
in order, and its $h$-th value updated based on the 
current $\ML$ and the $\THR$ in its state $k$ rows earlier.
More precisely, if $(h-1)$-th value of $\THR[i-k]$ 
is denoted with $j'$ (line~9)
and the nearest match on the current $\ML$ in a column greater 
or equal to $j' + k$, denoted by $x$ (line~10), is less than the current 
$h$-th element of $\THR$ (line~11), then $\THR[i][h]$ is 
updated to $x$ (lines~12--13).
As in the previous algorithm, $\THR[i]$ may get longer by one  
(line~15).
Finding the LCS$k$ length needs access to $k$ previous rows of $\THR$, 
and as each of them 
contains at most $\ell + 2 \leq n/k + 2$ items, 
the overall space is $O(n)$.
Also in this algorithm the desired sequence may be backtracked 
in $O(\ell)$ extra time, 
if backlinks are stored together with $\THR$ entries.
The memory use, however, is here $O(n\ell)$ words of space, 
due to physical copying of the $\THR[i]$ structures.

Let us analyze the time complexity of this algorithm.
It depends on how efficiently we can handle the successor queries 
in line~10.
Binary search over $\ML[\sigma_A]$
gives a factor $\log n$.
If $k$ is small enough, however, 
we can remove the logarithmic factor.
Two rows from the DP matrix will be considered equal 
(or one called a duplicate of another)
if they have matches in the same set of columns.
We start with a simple observation: for any $q \geq 1$ there cannot 
be more than $q$ {\em distinct} rows with at least $n/q$ matches 
in each of them.
Let us use two thresholds, $t_1$ and $t_2$, where $1 \leq t_2 < t_1 < n$.
We set $t_1 = k\log n$ and let the rows with less than  
$n/t_1$ matches be called ``sparse'',
those with at least $n/t_1$ and at most $n/t_2$ matches (the exact value 
of $t_2$ will be found later) be called ``dense'', 
and finally those rows with more than $n/t_2$ matches be called ``superdense''.
In the sparse rows, we calculate the successor query in $O(\log n)$ time, 
spending $O(n^2\log n/t_1) = O(n^2/k)$ time in total for them.

For the dense blocks, we partition each row into $n/b$ blocks 
of size $b$ cells each, where the exact value of $b$ will be found later.
Let us focus on a dense row $i$.
For each block $M[i][j+1 \ldots j+b]$ we first find 
$\ML[\sigma_A].\pcsucc(j+1+k-1)$ and $\ML[\sigma_A].\pcsucc(j+b+k-1)$ 
(using a linear scan over $\ML[\sigma_A]$) 
and if both values are the same, it means that all the cells 
in this block have the same successor used in line~10.
If not, we associate with this block a list of all its $b$ successors, 
one per each element from the block.
These values are stored as dynamic arrays, one per block, 
of size 1 or $b$.
The total time spent per a dense row is $O(n/b + nb/t_2)$.
There are at most $t_1 = k\log n$ distinct dense rows, 
and finding the successors for all of them takes 
$O(k\log n (n/b + nb / t_2))$ time, minimized for $b = \sqrt{t_2}$ to $O(nk\log n / \sqrt{t_2})$
(duplicate rows obtain references to the already computed answers, 
in $O(n)$ total time).

Finally, superdense rows are processed na{\"i}vely in $O(n)$ time each, 
with $O(n t_2)$ total time.
Overall, we obtain $O(n^2 / k + n k\log n / \sqrt{t_2} + n t_2)$ time, 
which is minimized for $t_2 = (k\log n)^{2/3}$, to yield 
$O(n^2 / k + n(k\log n)^{2/3})$ time.
This reduces to simply $O(n^2 / k)$ as long as 
$k = O((n / (\log n)^{2/3})^{3/5})$.

Alternatively, the successor queries may be handled with the 
famous van Emde Boas (vEB) tree~\cite{vEB1977}, in $O(\log\log n)$ time.
We need to maintain $O(n)$ such structures, 
using a variant with lazy initalization.
In this way, the total time of the insertions (including initializations) is 
$O(n \log \log n)$ and so is the space consumption.
The overall time complexity of this variant is $O(n^2\log\log n / k)$ for any~$k$.

\section{Tabulation-based algorithm}
\label{sec:4R}

The tabulation (also called ``4-Russians'') technique for 
dynamic programming algorithms 
consists in dividing the DP matrix into small blocks 
(usually $1 \times b$ or $b \times b$, for some $b$), 
such that the number of distinct blocks is small enough to be 
precomputed beforehand, e.g., with linear time-space resources.
For the LCS problem this technique was first applied by 
Masek and Paterson~\cite{MP1980}.

Let us now present a tabulation-based algorithm for LCS$k$, 
called DP-4R; 
the reader needs to know the (purpose of the) data structures
$\THR$ and $\ML$ used in the previous section.
In DP-4R, the current state of the list $\THR$ is represented 
as a bit-vector $\THRb$ of length $n$
and similarly the match lists, $\MLb$, for all $k$-strings from $A$ are built 
(to avoid $O(n^2)$ bits of space in the worst case, these lists 
can be built on the fly, one per row).
More precisely, $\THRb[i][j] = 1$ iff $M(i,j) - M(i,j-1) = 1$, for any 
$1 \leq j \leq n$.
Similarly, $\MLb[\sigma_A][j] = 1$ iff $B_{j-k+1 \ldots j} = \sigma_A$.
For the current row $i$, $i \geq k$, each snippet $\THRb[i][j+1 \ldots j+b]$ 
depends only on:\looseness=-1
\begin{itemize}
  \item the snippet $\THRb[i-1][j+1 \ldots j+b]$,
  \item the snippet $\THRb[i-k][j-k+1 \ldots j-k+b]$,
  \item the snippet $\MLb[\sigma_A][j+1 \ldots j+b]$,
  \item the difference $M(i, j) - M(i-1, j) \in \{0, 1\}$,
  \item the difference $M(i, j) - M(i-k, j-k) \in \{0, 1\}$.
\end{itemize}
(Both listed differences can be tracked easily 
during the rowwise snippet processing.)

Now, if $b = \Theta(\log n)$ with a small enough constant, we can compute 
the current snippet of $\THRb[i]$ in constant time, with a lookup table 
built in the preprocessing (e.g., in $O(n)$ time), obtaining 
an $O(n^2/\log n)$-time algorithm.
During the computations, the previous $k$ rows of $\THRb$ of length $n$ bits 
need to be available, which makes the overall space use $O(n + nk/\log n)$ 
words.

It remains however to explain how $\MLb[\sigma_A][j+1 \ldots j+b]$ snippets 
are prepared. 
To this end, we note that all snippets from a row 
$\MLb[\sigma_A][1 \ldots n]$ can easily be created from a corresponding 
match list (found in the linear-time preprocessing) in $O(\max(n/\log n, r'))$ 
time, where $r' \leq n$ is the number of matches in this row.
This means that all 
sparse rows, i.e. such for which $r' = O(n/\log n)$,  
pose no problem as the worst-case time of creating their $\MLb$ bit-vectors 
sums to $O(n^2/\log n)$.
The number of {\em distinct} remaining (dense) rows in the matrix 
is however limited to less than $\log n$ (cf. a similar reasoning in 
Section~\ref{sec:HS} for the algorithm LCS$k$-Dense), 
hence the $O(n\log n + n^2/\log n)$ time for preparing their snippets, 
including their first occurrences and all duplicates, 
does not hamper the overall time complexity either.

An LCS$k$ sequence can now be extracted basically like in the 
plain DP approach, in $O(n + k\ell) = O(n)$ time.
To this end, the last 1 in $\THRb[n]$ is found, with a linear scan 
from right to left, and its column $j$ is the end position of the 
last $k$-string in the result.
After that, we go to the row $n-k$ and column $j-k$, 
and scan left for the nearest 1,
which will correspond to the penultimate $k$-string, and so on.

It is tempting to devise a similar algorithm based on bit logic 
rather than a precomputed table, 
but we suppose that obtaining $O(n^2/w)$ time, 
where $w \geq \log n$ is the machine word size, may be hard or even 
impossible for the LCS$k$ problem.


\section{Conclusions}

We presented four algorithms, with respectively $O(n^2)$, $O(n + r \log\ell)$,
$O(n^2 / k + n(k\log n)^{2/3})$ 
and $O(n^2/\log n)$ time complexities, 
for the recently introduced problem 
of finding the longest common subsequence in $k$-length substrings.
We used several major techniques known from the research on LCS 
and related problems: dynamic programming, sparse dynamic programming, 
tabulation.
Their application to LCS$k$ was, however, non-trivial; for example 
using the Hunt--Szymanski approach required a persistent data structure 
to preserve an attractive time complexity.


\section*{Acknowledgement}
The work was supported by the Polish National Science Center upon decision DEC-2011/03/B/ST6/01588 (first author).

\bibliographystyle{abbrv}
\bibliography{lcsk}

\end{document}